\theoremstyle{plain}
\newtheorem{theorem}{Theorem}
\theoremstyle{definition}
\newtheorem{definition}{Definition}
\theoremstyle{remark}
\newtheorem{example}{Example}
\newcommand{\bs}{\backslash}
\newcommand{\Logic}{\mathrm{MALC}^\ast_\omega}
\title{Extending Action Logic with Omega Iteration}
\author{Tikhon Pshenitsyn}
\begin{document}

\maketitle

\section{Introduction}

Omega algebras introduced in \cite{Cohen00} extend Kleene algebras---abstract counterpart of regular expressions---by omega iteration and thus enable one to reason about both finite and infinite looping processes in system behaviour. While finite Kleene iteration is defined by induction as the least fixed point of the operator $\lambda x. (1 \vee a\cdot x)$, omega iteration is defined in \cite{Cohen00} by the unfolding axiom $x=x\cdot x^\omega$ and the coinduction principle: $x \le y \cdot x+z \Longrightarrow x \le y^\omega+y^\ast \cdot z$. Action algebras is another extension of Kleene algebras \cite{Pratt91}, now with residuals, a.k.a.~implications of the Lambek calculus. They admit relational semantics where a set of states $W$ is fixed and formulae are interpreted as binary relations, i.e.~transitions between states. 

Axiomatic systems concerning omega algebras and omega-regular languages have been actively studied in the recent literature. For example, \cite{DasD24} presents a cyclic proof system where omega iteration is naturally axiomatised by the greatest fixed point of a suitable operator.

We are interested in providing a well-founded proof system for omega-regular expressions that would befriend action algebras with omega iteration. There can be several ways to do so. In this preliminary technical report, we propose one solution where $A^\omega$ is interpreted as infinitary multiplicative conjunction $A \cdot A \cdot A \cdot \ldots$ Interestingly, a similar operation arose in a quite different context in \cite{NicolaiPT24} as vacuous multiplicative quantifier. In Section \ref{section:calculus} we introduce the infinitary action logic with omega iteration $\Logic$ and prove cut admissibility. In Section \ref{section:completeness} we show that $\Logic$ axiomatizes valid inclusions of omega-regular languages. In Section \ref{section:complexity}, we prove the lower $\Pi^1_2$ and the upper $\Pi^2_1$ complexity bounds for provability in $\Logic$.

\section{Multiplicative-Additive Lambek Calculus with Kleene and Omega Iterations}\label{section:calculus}

Following \cite{Cohen00}, we introduce the calculus $\Logic$ using \emph{standard terms}. Formulae have two sorts, corresponding to languages consisting of finite words and to those consisting of infinite ones.

\begin{definition}
	Fix two sets of variables, $\mathrm{Var}_\ast$ and $\mathrm{Var}_\omega$. The syntax of finite formulae $F \in \mathrm{Fm}_{\ast}$ and $\omega$-formulae $\Omega \in \mathrm{Fm}_{\omega}$ is defined by the following grammar in Backus-Naur form:
	\begin{align*}
		F & := \mathrm{Var}_\ast \mid F \cdot F \mid F \bs F \mid F / F \mid F \vee F \mid F \wedge F \mid \Omega / \Omega \mid F^\ast \\
		\Omega & := \mathrm{Var}_\omega \mid F \cdot \Omega \mid F \bs \Omega \mid \Omega \vee \Omega \mid \Omega \wedge \Omega \mid F^\omega
	\end{align*}
	Let $\mathrm{Fm} = \mathrm{Fm}_\ast \cup \mathrm{Fm}_\omega$.
\end{definition}
\begin{definition}
	A sequent is a structure of one of the three forms:
	\begin{enumerate}
		\item $A_1,\ldots,A_n \Rightarrow B$, \quad $A_i,B \in \mathrm{Fm}_\ast$ ($A_1,\ldots,A_n$ is called a type-1 sequence);
		\item $A_1,A_2,\ldots \Rightarrow B$, \quad $A_i \in \mathrm{Fm}_\ast$, $B \in \mathrm{Fm}_\omega$ ($A_1,A_2,\ldots$ is called a type-2 sequence);
		\item $A_1,\ldots,A_n,C \Rightarrow B$, \quad $A_i \in \mathrm{Fm}_\ast$, $B,C \in \mathrm{Fm}_\omega$ ($A_1,\ldots,A_n,C$ is called a type-3 sequence).
	\end{enumerate}
	A \emph{correct sequence} is a sequence of type-$i$ for $i \in \{1,2,3\}$.
\end{definition}
Since sequents---the main syntactic objects we are dealing with---are infinite objects in general, one could expect that the calculus we are about to define potentially has a very high complexity level. 

In the following, $\Pi,\Pi_i,\Gamma_i,\Delta_i$ are type-1 sequences; $\Theta$ can be a sequence of any of the three types. The notation $\langle \Gamma_i \rangle_{i<\alpha}$ is a shorthand for $\Gamma_0,\Gamma_1,\ldots$

\textbf{Axioms}
$$
A \vdash A
$$

\textbf{Left rules}

$$
\infer[(\cdot L)]
{
	\Gamma, A \cdot B, \Theta \vdash E
}
{
	\Gamma, A , B, \Theta \vdash E
}
\qquad
\infer[(\bs L)]
{
	\Gamma, \Pi, B \bs A, \Theta \vdash E
}
{
	\Gamma, A, \Theta \vdash E & \Pi \vdash B
}
\qquad
\infer[(/ L)]
{
	\Gamma, A/D, \Xi, \Theta \vdash E
}
{
	\Gamma, A, \Theta \vdash E & \Xi \vdash D
}
$$
$$
\infer[(\wedge L)]{\Gamma, A_1 \wedge A_2, \Theta \vdash E}{\Gamma, A_i, \Theta \vdash E}
\qquad
\infer[(\vee  L)]{\Gamma, A_1 \vee A_2, \Theta \vdash E}{\Gamma, A_1, \Theta \vdash E & \Gamma, A_2, \Theta \vdash E}
$$
$$
\infer[(^\ast L)]{\Gamma , A^\ast , \Theta \vdash E}{(\Gamma , A^n , \Theta \vdash E)_{n \in\omega}}
\qquad
\infer[(^\omega L)]{\Gamma , A^\omega \vdash E}{\Gamma , \langle A \rangle_{n \in \omega} \vdash E}
$$

We also need a left rule that allows one to apply infinitely many left rules in an infinite sequent.

\begin{definition}\label{definition:Vdash}
	Let $\Theta$ be a correct sequence and let $M$ be a set of correct sequences of the same type. We write $\Theta \Vdash M$ if there is a derivation of $\Theta \vdash x$ from axioms and hypotheses from $\{\Xi \vdash x \mid \Xi \in M\}$ that uses all these hypotheses and only the left rules introduced above ($x$ is a fresh variable).
\end{definition}

\begin{example}
	$A_1 \vee A_2 \Vdash \{A_1,A_2\}$; $A^\ast \Vdash \{A^n \mid n \in \omega\}$; $\Theta \Vdash \Theta$.
\end{example}
Now, the infinitary left rule of $\Logic$ is introduced as follows.
$$
\infer[(L_\omega)]
{
	\langle \Delta_i \rangle_{i<\omega} \vdash C
}
{
	(\Delta_i \Vdash M_i)_{i<\omega}
	&
	\forall (\Gamma_i \in M_i) \: 
	\langle \Gamma_i \rangle_{i<\omega} \vdash C
}
$$

\textbf{Right rules}
$$
\infer[(^\omega R)]
{
	\langle \Gamma_i \rangle_{i<\omega} \vdash A^\omega
}
{
	(\Gamma_i \vdash A)_{i<\omega}
}
$$
$$
\infer[(/ R)]{\Pi \vdash A / B}{\Pi, B \vdash A}
\qquad
\infer[(\bs R)]{\Theta \vdash B \bs A}{B, \Theta \vdash A}
\qquad
\infer[(\cdot R)]{\Gamma, \Theta \vdash A \cdot B}{\Gamma \vdash A & \Theta \vdash B}
$$
$$
\infer[(\wedge R)]{\Theta \vdash A_1 \wedge A_2}{\Theta \vdash A_1 & \Theta \vdash A_2}
\qquad
\infer[(\vee R_i)]{\Theta \vdash A_1 \vee A_2}{\Theta \vdash A_i}
\qquad
\infer[(^\ast R_n)]{\langle \Pi_i \rangle_{i<n} \vdash A^\ast}{(\Pi_i \vdash A)_{i<n}}
$$

Proofs in $\Logic$ are well-founded trees. It is not hard to see that each rule application has either finitely many premises or countably many premises, except for the $(L_\omega)$ rule, which may have at most continuum many premises. 

%\begin{definition}
%	The rank $\mathrm{rk}(\pi)$ of a proof $\pi$ is an ordinal defined as follows. The rank of a proof that consists of an axiom is $0$. Assume that a proof ends as follows:
%	\[
%	\infer[]{s}{(\pi_\beta)_{\beta < \alpha}}
%	\]
%	Here either $\alpha$ is finite or $\alpha = \aleph_0$, or $\alpha = 2^{\aleph_0}$. Construct a well-order by taking the ordinal $\alpha$ and replacing $\beta \in \alpha$ by a well-order of the order type $\mathrm{rk}(\pi_\beta)$. Then, if $\rho$ is the order type of this well-order, 
%\end{definition}

The idea of $(L_\omega)$ is that, when one deals with an infinite-size sequent, it is possible that it is obtained by infinitely many rules applied simultaneously; however, they affect non-overlapping parts of the sequent and hence do not interfere. 

The rules for omega iteration show that $A^\omega$ is simply an infinitary generalization of the Lambek calculus' multiplicative conjunction. Also, $A^\omega$ is similar to vacuous multiplicative quantifier $\forall x A$ from \cite{NicolaiPT24}. The only difference is that the logic from \cite{NicolaiPT24} admits weakening and exchange.

\begin{example}
	Below we show that the sequent $(q \cdot p/q)^\omega \Rightarrow q \cdot p^\omega$ is provable in $\Logic$.
	\[
	\infer[(^\omega L)]
	{
		(q \cdot p/q)^\omega \Rightarrow q \cdot p^\omega
	}
	{
		\infer[(L_\omega)]
		{
			q \cdot p/q, q \cdot p/q, \ldots \Rightarrow q \cdot p^\omega
		}
		{
			\infer[(\cdot L)]
			{ (q \cdot p/q \vdash {\color{red}x})_{i < \omega}
			}
			{ q , p/q \vdash {\color{red}x}
			}
			&
			\infer[(\cdot R)]
			{
				q , p/q, q , p/q, \ldots \Rightarrow q \cdot p^\omega
			}
			{
				q \Rightarrow q
				&
				\infer[(L_\omega)]
				{
					p/q, q , p/q, q, \ldots \Rightarrow p^\omega
				}
				{
					\infer[]{(p/q,q \vdash {\color{red}x})_{i < \omega}}{p \vdash {\color{red}x} & q \vdash q}
					&
					\infer[({}^{\omega} R)]{p, p, \ldots \Rightarrow p^\omega}{(p \Rightarrow p)_{i<\omega}}
				}
			}
		}
	}
	\]
	Hereinafter we exploit the following notation: instead of writing $\Delta \Vdash M$, we place a derivation of $\Delta \vdash x$ from $\{\Xi \vdash x \mid \Xi \in M\}$ as in Definition \ref{definition:Vdash}. The fresh variable $x$ is coloured red above. This make the calculus look analytic.
\end{example}
\begin{example}
	Below we show that the sequent $p^n \Rightarrow p^\omega/p^\omega$ is provable in $\Logic$ for any $n \in \omega$.
	\[
	\infer[(/R)]
	{
		p^n \Rightarrow p^\omega/p^\omega
	}
	{
		\infer[(^\omega L)]
		{
			p^n,p^\omega \Rightarrow p^\omega
		}
		{
			\infer[({}^\omega R)]
			{
				p,p,p,\ldots \Rightarrow p^\omega
			}
			{
				(p \Rightarrow p)_{i < \omega}
			}
		}
	}
	\]
\end{example}

\begin{definition}
	Distinguished formula occurrences in antecedents/succedents of conclusions of the left/right rules are called \emph{principal}. In an application of $(L_\omega)$, we assume that $\Delta_i \Vdash M_i$ is replaced by a derivation of $\Delta_i \vdash x$ from $\{\Xi \vdash x \mid \Xi \in M\}$, as in Definition \ref{definition:Vdash}. If this derivation is not an axiom, then a distinguished formula in its last rule application (contained in $\Delta_i$) is also called \emph{principal} in $(L_\omega)$'s rule application. (Thus an application of $(L_\omega)$ can have up to countably many principal formulae.)
\end{definition}

The cut rule is admissible in $\Logic$. To prove this, we need to define a generalized ``mix'' rule that allows one to cut infinitely many formulae at once. (A similar argument occurs in \cite{NicolaiPT24}.) The rule looks as follows for $\alpha \le \omega$:
$$
\infer[(\mathrm{mix}_\alpha)]
{
	\Sigma, \langle \Pi_i , \Psi_i \rangle_{i<\alpha} \vdash C
}
{
	(\Pi_i \vdash A_i)_{i<\alpha}
	&
	\Sigma, \langle A_i , \Psi_i \rangle_{i<\alpha} \vdash C
}
$$
It is required that the sizes of $A_i$ are bounded by some $K \in \omega$.

\begin{theorem}
	The $(\mathrm{mix}_\alpha)$ rule is admissible.
\end{theorem}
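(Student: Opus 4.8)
The plan is to prove admissibility of $(\mathrm{mix}_\alpha)$ by a double induction, following the standard Gentzen-style pattern adapted to the infinitary setting. The outer induction is on the bound $K$ on the sizes of the cut formulae $A_i$ (or, more precisely, on the multiset/sup of their complexities), and the inner induction is on the pair of derivation heights (ordinals) of the two premises, ordered lexicographically or by natural sum. Since proofs in $\Logic$ are well-founded trees, each premise carries an ordinal rank, so this induction is legitimate even though the trees are infinitely branching. The key structural fact I would isolate first is a case analysis on whether the cut formulae $A_i$ are principal in the last rules of both premises.

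First I would handle the easy cases: if in the left premise $\Pi_i \vdash A_i$ the formula $A_i$ is not principal (i.e.\ the last rule acts on some formula inside $\Pi_i$), we permute $(\mathrm{mix}_\alpha)$ above that rule, reducing the height of the left premise while keeping $K$ fixed; symmetrically for the right premise $\Sigma, \langle A_i, \Psi_i\rangle \vdash C$ when the last rule does not touch any of the distinguished $A_i$'s. The genuinely infinitary subtlety here is that in the right premise the last rule may be $(L_\omega)$ or $(^\ast L)$ acting on an $A_i$ for infinitely many $i$ at once, or acting on a non-principal formula while also being an $(L_\omega)$ that "straddles" the sequence; one must check that pulling the mix through $(L_\omega)$ still yields a legitimate $(L_\omega)$ application, which it does because $\Vdash$ is closed under prefixing/suffixing type-1 sequences and the premises $\langle\Gamma_i\rangle \vdash C$ are exactly what the inner induction hypothesis applies to. Then I would handle the principal/principal cases by connective: for $\cdot, \bs, /, \vee, \wedge$ these are the usual reductions producing mixes on strictly smaller formulae (so $K$ decreases), possibly several of them combined, using that $(\mathrm{mix}_\alpha)$ with the new smaller formulae is available by the outer IH. The cases for $A_i = B^\ast$ and $A_i = B^\omega$ are where the infinitary rules $(^\ast L)/(^\ast R_n)$ and $(^\omega L)/(^\omega R)$ meet: a mix on $B^\omega$ cut between $(^\omega R)$ on the left and $(^\omega L)$ on the right unfolds into a $(\mathrm{mix}_\omega)$ on the sequence $\langle B\rangle_{n<\omega}$ with cut formula $B$ of smaller size, so the outer IH applies; similarly $B^\ast$ reduces to finitely many mixes on $B$.

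The main obstacle I expect is the bookkeeping when several cut formulae $A_i$ are simultaneously principal and the last rule on the right is $(L_\omega)$ — one must simultaneously unfold the $\Vdash$-derivations for all the affected positions, reassemble them into a single (possibly again infinitary) mix whose cut formulae are the immediate subformulae, and verify that the size bound $K$ is still met (this is exactly why the hypothesis "sizes of $A_i$ bounded by $K$" is imposed: subformulae of finitely-bounded formulae are again finitely bounded). A secondary point of care is that $(\mathrm{mix}_\alpha)$ for $\alpha$ finite must include ordinary cut as the case $\alpha = 1$, and that in the permutation cases the arity $\alpha$ may change (e.g.\ an $(L_\omega)$ on the left splitting one $\Pi_i$ into infinitely many new components), so the statement should really be proved for all $\alpha \le \omega$ uniformly, with the induction measure chosen so that these reshufflings do not increase it. Once $(\mathrm{mix}_\alpha)$ is admissible, cut admissibility is the special case $\alpha = 1$.
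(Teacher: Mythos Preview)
Your overall strategy---outer induction on the size bound $K$, inner induction on derivation heights, with the standard permutation/reduction case split---is essentially the one the paper uses. Two points deserve sharpening, though.

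First, your induction scheme is underspecified. You speak of ``the pair of derivation heights of the two premises'', but $(\mathrm{mix}_\alpha)$ has $\alpha$ left premises and one right premise; for $\alpha=\omega$ there is no single left height. The paper's sketch uses a four-parameter lexicographic induction: (a) the supremum of the depths of the cut formulae $A_i$, (b) the ordinal $\alpha$ itself, (c) the (quasi)height of the right premise, (d) the supremum of the heights of the left premises. In particular the paper uses (b) to peel off one cut at a time when $\alpha$ is finite, reducing $(\mathrm{mix}_\alpha)$ to $(\mathrm{mix}_{\alpha-1})$ followed by a single cut; your ``uniform over all $\alpha$'' treatment is not wrong, but you then need (d) stated correctly as a supremum, and you lose the convenient reduction of the finite case to ordinary cut elimination \`a la Palka.

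Second, and more substantively: in the $(L_\omega)$ case on the right, you correctly flag that one must ``unfold the $\Vdash$-derivations'', but you treat this as bookkeeping. The paper singles out exactly this step as the genuine obstacle. The blocks $\Xi_k$ in the $(L_\omega)$ decomposition each contain finitely many cut formulae $A_{i_k},\ldots,A_{i_{k+1}-1}$, and one would like to apply a finite mix to the sequent $\Xi_k \vdash x$. But this sequent is \emph{not} derivable in $\Logic$: it is only derivable from the hypotheses $\{\Psi \vdash x \mid \Psi \in M_k\}$, so the induction hypothesis does not apply to it. The paper's move is to lift the mix through the $\Vdash$-derivation (a tree built from left rules only, over a fresh variable $x$) down to the hypothesis sequents, thereby replacing each $M_k$ by a new set $M'_k$ in which the $A_i$'s have been substituted by the corresponding $\Pi_i$'s; only then does one re-apply $(L_\omega)$ and invoke the induction hypothesis on the premises $\langle \Psi_i\rangle_{i<\omega}\vdash C$, which now have strictly smaller right (quasi)height. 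Your proposal gestures at this but does not name the obstacle or the lifting manoeuvre; without it the induction as you state it does not close.
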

\begin{proof}[Proof sketch]
	The proof goes by induction on (a) the supremum of the depths of $A_i$'s, (b) on $\alpha$, (c) on the smallest quasiheight of a derivation ending with $\Sigma, \langle A_i , \Psi_i \rangle_{i<\alpha} \vdash C$, (d) on the supremum of the smallest heights of derivations of $(\Pi_i \vdash A_i)_{i<\alpha}$.
	
	If $\alpha$ is finite, one can split $(\mathrm{mix}_\alpha)$ into an application of $(\mathrm{mix}_{\alpha-1})$ (to which we apply the induction hypothesis) and that of $(\mathrm{cut})$. The cut rule can be eliminated in the same manner as in \cite{Palka07}---unless the formula being cut is an omega iteration. Let us consider the case where $A^\omega$ is principal in both premises of cut:
	\[
	\infer[(\mathrm{cut})]
	{
		\Gamma,\langle \Gamma_i \rangle_{i<\omega} \Rightarrow C
	}
	{
		\infer[(^\omega R)]
		{
			\langle \Gamma_i \rangle_{i<\omega} \vdash A^\omega
		}
		{
			(\Gamma_i \vdash A)_{i<\omega}
		}
		&
		\infer[(L)]
		{
			\Gamma, A^\omega \Rightarrow C
		}
		{
			\Gamma \Vdash \Gamma & A^\omega \Vdash A,A,A,\ldots
			&
			\phantom{abc}
			&
			\Gamma,A,A,A,\ldots \Rightarrow C
		}
	}
	\]
	This is transformed into an application of $(\mathrm{mix}_\omega)$:
	\[
	\infer[(\mathrm{mix}_\omega)]
	{
		\Gamma,\langle \Gamma_i \rangle_{i<\omega} \Rightarrow C
	}
	{
		(\Gamma_i \vdash A)_{i<\omega}
		&
		\Gamma,A,A,A,\ldots \Rightarrow C
	}
	\]
	Here the size of $A$'s is less than the size of $A^\omega$, so the induction hypothesis can be applied.
	
	Now, let us discuss admissibility of 
	\[
	\infer[(\mathrm{mix}_\omega)]
	{
		\Sigma, \langle \Pi_i , \Psi_i \rangle_{i<\omega} \vdash C
	}
	{
		(\Pi_i \vdash A_i)_{i<\omega}
		&
		\Sigma, \langle A_i , \Psi_i \rangle_{i<\omega} \vdash C
	}
	\]
	Note that each $A_i$ is finite.	Fix a derivation of $\Sigma, \langle A_i , \Psi_i \rangle_{i<\omega} \vdash C$. If $\Sigma, \langle A_i , \Psi_i \rangle_{i<\omega} \vdash C$ is obtained by a right rule, one can permute it with $(\mathrm{mix}_\omega)$. So let us assume that it is obtained by means of $(L_\omega)$. This rule application splits $\Sigma, \langle A_i , \Psi_i \rangle_{i<\omega}$ into countably many parts $\Xi_0,\Xi_1,\ldots$ The $k$-th part has the form $\Xi_k =  \Gamma^k_{0}, A_{i_k}, \Delta^k_0, \ldots, A_{i_{k+1}-1}, \Delta^k_{i_{k+1}-i_k-1}$ for $i_k \le i_{k+1}$. The application of $(L_\omega)$ looks as follows:
	$$
	\infer[(L_\omega)]
	{
		\langle \Xi_i \rangle_{i<\omega} \vdash C
	}
	{
		(\Xi_i \Vdash M_i)_{i<\omega}
		&
		\forall (\Psi_i \in M_i) \: 
		\langle \Psi_i \rangle_{i<\omega} \vdash C
	}
	$$
	
	Consider $\Xi_k$ for some fixed $k$. For the sake of simplicity, let $k=0,i_0=0, i_1=h$, i.e.~$\Xi_0 = \Gamma_{0}, \langle A_{i}, \Delta_i \rangle_{i<h}$ (we also omit the superscript $0$). Consider the following mix application:
	\[
	\infer[]
	{
		\Gamma_{0}, \langle \Pi_{i}, \Delta_i \rangle_{i<h} \vdash x
	}
	{
		(\Pi_i \vdash A_i)_{i<h}
		&
		\Gamma_{0}, \langle A_{i}, \Delta_i \rangle_{i<h} \vdash x
	}
	\]
	Unfortunately, we cannot apply the induction hypothesis because the rightmost premise is not derivable in $\Logic$. We know, however, that it is derivable from the hypotheses from $\mathcal H = \{\Psi \vdash x \mid \Psi \in M_0\}$. We lift the mix rule application to the sequents from $\mathcal H$. Finally, we introduce a series of new mix rules applied to sequents of the form $\langle \Psi_i \rangle_{i<\omega} \vdash C$ for all $\Psi_i \in M_i$. 
	
\end{proof}

Another, notationally interesting way to introduce the cut rule is as follows:
$$
\infer[(\mathrm{cut})]
{
	\Pi \Vdash A
}
{
	\Pi \vdash A
}
$$

Cut admissibility implies that the rules $(\vee L)$, $(\wedge R)$, $(\cdot L)$, $(\bs R)$, $(/ R)$, $(^\ast L)$, and $(^\omega L)$ are invertible.

\section{Completeness for omega-regular expressions}\label{section:completeness}

An omega-regular expression is a term in the language $\cdot, \vee, ^\ast, ^\omega$. A variable $x$ is interpreted as the language $\{x\}$ and the operations are interpreted in a standard manner.
\begin{theorem}
	Let $\alpha,\beta$ be omega-regular expressions. Then $\alpha \subseteq \beta$ holds iff $\Logic$ proves $\alpha \vdash \beta$.
\end{theorem}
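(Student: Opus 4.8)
The plan is to prove the two implications separately. For soundness---that $\Logic \vdash \alpha \Rightarrow \beta$ implies $\alpha \subseteq \beta$---I would work in the obvious language model: each finite formula denotes a language of finite words over $\mathrm{Var}_\ast$ and each $\omega$-formula denotes a set of $\omega$-words, with $\cdot,\vee,\wedge,{}^\ast$ interpreted as usual, the residuals as the corresponding divisions, and $\widehat{F^\omega}=\{w_0w_1w_2\cdots : w_i\in\widehat F\}$ (the sequent-type discipline will ensure only genuine $\omega$-words arise). Call a sequent $\Theta\Rightarrow C$ valid when the concatenation of the denotations of the antecedent formulae is contained in $\widehat C$. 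Soundness is then an induction on the well-founded proof tree: each rule other than $(L_\omega)$ preserves validity by a direct algebraic computation, and for $(L_\omega)$ one first establishes the auxiliary fact that $\Theta\Vdash M$ implies $\widehat\Theta\subseteq\bigcup_{\Gamma\in M}\widehat\Gamma$ (an easy induction on the left-rule derivation witnessing $\Vdash$), after which $(L_\omega)$ reduces to the distributivity of $\omega$-concatenation over unions.

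For completeness I would isolate two lemmas. \emph{Lemma A (``right decomposition''):} for every regular or $\omega$-regular $\beta$ and every $w\in L(\beta)$ --- viewed as a type-1 sequence of variables if $\beta$ is finite-sorted and as a type-2 sequence of variables if $\beta$ is $\omega$-sorted --- the sequent $w\Rightarrow\beta$ is provable; this is a direct induction on $\beta$, using $(\cdot R)$, $(\vee R_i)$, $(^\ast R_n)$ and $(^\omega R)$ for the respective connectives. \emph{Lemma B (``left decomposition''):} for every finite-sorted regular expression $A$ one has $A\Vdash L(A)$, i.e.\ $A\Rightarrow x$ (with $x$ a fresh variable) is derivable from the hypotheses $\{w\Rightarrow x : w\in L(A)\}$ using only left rules; here no instance of $(L_\omega)$ is needed, everything being finite. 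Lemma B is an induction on $A$ resting on the observation that $\Vdash$ is transparent to context extension and hence composes ($A_1\Vdash M_1$ and $A_2\Vdash M_2$ yield $A_1,A_2\Vdash M_1\cdot M_2$ by prepending/appending the two derivations), with $(^\ast L)$ handling the star; one checks in passing that $L(G)\neq\emptyset$ for every finite-sorted regular $G$, so no degenerate hypothesis set occurs.

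With both lemmas available the theorem is assembled as follows. If $\alpha$ is finite-sorted, take the left-rules-only derivation of $\alpha\Rightarrow x$ furnished by Lemma B, replace $x$ everywhere by $\beta$ --- legitimate because no left rule touches the succedent --- and close each leaf $w\Rightarrow\beta$ by Lemma A, using $w\in L(\alpha)\subseteq L(\beta)$. If $\alpha$ is $\omega$-sorted, first drive $\alpha\Rightarrow\beta$ upward with the invertible rules $(\vee L)$, $(\cdot L)$, $(^\omega L)$; since the $\omega$-part of $\alpha$ contains only finitely many connectives, this terminates after finitely many steps on finitely many branches, each branch ending at a type-2 sequent of the form $F_1,\ldots,F_k,\langle G\rangle_{i<\omega}\Rightarrow\beta$ with $F_1,\ldots,F_k,G$ finite-sorted regular and $L(F_1)\cdots L(F_k)\cdot L(G)^\omega\subseteq L(\alpha)$. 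To such a sequent apply $(L_\omega)$ with the partition $\Delta_0=(F_1,\ldots,F_k,G)$ and $\Delta_i=(G)$ for $i\ge 1$ and with $M_0=L(F_1\cdots F_kG)$, $M_i=L(G)$: the first family of premises $\Delta_i\Vdash M_i$ is delivered by Lemma B (with $x$ as the fresh succedent required by the rule), and every second-family premise that is a correct sequent has an antecedent spelling an $\omega$-word in $L(F_1)\cdots L(F_k)\cdot L(G)^\omega\subseteq L(\alpha)\subseteq L(\beta)$, hence is provable by Lemma A.

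I expect the main obstacle to be the correct handling of $(L_\omega)$ together with the sequent-type discipline on both sides. For soundness it is what forces the auxiliary ``$\Vdash$ implies language inclusion'' lemma; for completeness it is the bookkeeping that the $\omega$-sequents produced by peeling really are conclusions of a legitimate $(L_\omega)$ instance --- with a partition into single finite formulae, with the sets $M_i$ coming from Lemma B, and, crucially, relying on the fact that a tuple $(\Gamma_i)_{i<\omega}\in\prod_i M_i$ contributes a premise only when $\langle\Gamma_i\rangle_{i<\omega}$ is a correct (type-2) sequence, which is exactly what prevents spurious ``empty'' premises and keeps the denotation of $F^\omega$ equal to the standard $\omega$-power. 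The surrounding manipulations --- the composition property of $\Vdash$, the termination of peeling, the substitution of $\beta$ for the fresh variable --- are routine once this point is set up carefully.
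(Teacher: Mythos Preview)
Your plan is correct and, for the completeness direction, follows the same line as the paper: reduce $\alpha \vdash \beta$ to the family $\{w \vdash \beta : w \in L(\alpha)\}$ and derive each of those by right rules. The paper's two-sentence sketch simply cites invertibility of $(\cdot L),(\vee L),({}^\ast L),({}^\omega L)$ to claim equiderivability with all the $w\vdash\beta$, without saying how one actually passes from an infinite antecedent $G,G,G,\ldots$ down to every $\omega$-word of $L(G)^\omega$; your Lemma~B ($A\Vdash L(A)$) together with the explicit $(L_\omega)$ instance is precisely the missing step, so your write-up is more honest about where the infinitary rule is doing work. For soundness you diverge: the paper stays syntactic (invertibility reduces to $w\vdash\beta$, then a cut-free proof of $w\vdash\beta$ can only use right rules, forcing $w\in L(\beta)$), whereas you build a language model and verify each rule semantically. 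Both are standard; your route avoids appealing to cut elimination, at the modest cost of having to spell out the semantics of type-2 antecedents and the auxiliary ``$\Theta\Vdash M$ implies $\widehat\Theta\subseteq\bigcup_{\Gamma\in M}\widehat\Gamma$'' lemma.
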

\begin{proof}
	If $\alpha,\beta$ do not contain $\omega$, then this follows from the corresponding result for infinitary action logic. Now, using invertibility of $(\cdot L)$, $(\vee L)$, $(^\ast L)$, and $(^\omega L)$ we can ensure that $\alpha \vdash \beta$ is equiderivable with the sequents of the form $w \vdash \beta$ for all $w \in \alpha$ (infinite words). In the latter sequent, only right rules can be applied, which ensures that $w \in \beta$. This proves the less trivial completeness part.
\end{proof}

If one adds $\bs,/$ along with their standard language semantics, however, the result does not hold anymore. Namely, consider the sequent $q \cdot (q \bs (p \cdot q))^\omega \vdash p^\omega$. It is true in language semantics but not provable.

\section{Complexity}\label{section:complexity}

\begin{theorem}
	Provability in $\Logic$ is in $\Pi^2_1$.
\end{theorem}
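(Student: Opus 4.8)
The plan is to express provability in $\Logic$ as membership in the least fixed point of the one-step derivability operator, and then to read off a $\Pi^2_1$ definition from the Knaster--Tarski characterisation of that fixed point as the intersection of all of its prefixed points.

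Fix an encoding of syntax. Finite formulae and type-1 and type-3 sequents are finite objects and are coded by natural numbers; a type-2 sequent is an $\omega$-sequence of finite formulae together with one $\omega$-formula, hence is coded by a point of Baire space. Thus the set $\mathrm{Seq}$ of all sequents is coded by a set of reals, and a collection of sequents is a set-of-reals, i.e.\ a second-order object. Call $\Sigma \subseteq \mathrm{Seq}$ \emph{rule-closed} if, for every correctly formed instance of a rule of $\Logic$ all of whose premises lie in $\Sigma$ (where being correctly formed includes, in the case of $(L_\omega)$, the existence of the built-in $\Vdash$-derivations), the conclusion also lies in $\Sigma$; in particular every rule-closed set contains all axioms. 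Since proofs in $\Logic$ are well-founded trees, a straightforward induction on such a tree shows that every provable sequent belongs to every rule-closed set. Conversely the set of provable sequents is itself rule-closed: given a rule instance whose premises are all provable, graft the well-founded proof trees of those premises (together with the $\Vdash$-derivations when the rule is $(L_\omega)$) above the instance; the result is still well-founded, because any infinite branch would descend into, and thereafter remain inside, the proof tree of a single premise. Consequently $S$ is provable if and only if $\forall \Sigma\,\bigl(\Sigma\text{ is rule-closed}\to S\in\Sigma\bigr)$.

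It remains to bound the complexity of ``$\Sigma$ is rule-closed''. It is the conjunction, over the countably many rule schemas, of clauses of the form ``for every instance, (all premises lie in $\Sigma$) $\to$ (the conclusion lies in $\Sigma$)''. For every rule other than $(L_\omega)$, an instance is specified by finitely many parameters---some of which, such as a side sequence $\Theta$, may themselves be reals---and it has at most countably many premises; matching the parameters against the conclusion and testing membership of each premise in $\Sigma$ is arithmetical in $\Sigma$ and the parameters, so the clause is $\Pi^1_1$ in $\Sigma$. For $(L_\omega)$ an instance is given by a partition of its conclusion as $\langle\Delta_i\rangle_{i<\omega}\vdash C$ together with a sequence $(M_i)_{i<\omega}$; since left-rule decomposition of the finite sequences $\Delta_i$ terminates, a witnessing derivation of $\Delta_i\Vdash M_i$ may be taken of height bounded by the complexity of $\Delta_i$, so ``$\Delta_i\Vdash M_i$'' is arithmetical in $\Delta_i$ and $M_i$ (and in any case analytical with no second-order parameter, which is all we need). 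A partition, a sequence $(M_i)$, and a choice function $c$ with $c(i)\in M_i$ are all coded by reals, so the $(L_\omega)$-clause reads: for all reals coding a partition and a sequence $(M_i)$, if $\forall i\,(\Delta_i\Vdash M_i)$ and $\forall c\,(\langle c(i)\rangle_{i<\omega}\vdash C\in\Sigma)$, then $\langle\Delta_i\rangle_{i<\omega}\vdash C\in\Sigma$; this is $\Pi^1_2$ in $\Sigma$. Taking the conjunction, ``$\Sigma$ is rule-closed'' is $\Pi^1_2$ in $\Sigma$; hence ``$\Sigma$ rule-closed $\to S\in\Sigma$'' is $\Sigma^1_2$ in $\Sigma$, and prefixing the single set-of-reals quantifier $\forall\Sigma$ places provability of $S$ in $\Pi^2_1$.

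The main obstacle is the bookkeeping around $(L_\omega)$: one must verify that the witnesses $(M_i)$ and the choice functions $c$ are genuine first-order (real) objects, so that the only second-order quantifier in the final formula is the outermost $\forall\Sigma$---this is exactly what keeps the bound at $\Pi^2_1$ rather than higher---and that ``$\Delta\Vdash M$'' sits low in the hierarchy, which rests on the termination of left-rule decomposition (each of $(\cdot L),(\bs L),(/L),(\wedge L),(\vee L),(^\ast L)$ strictly reduces the multiset of formula sizes). The remaining ingredients---encoding sequents, pattern-matching conclusions against rule schemas, and the Knaster--Tarski step---are routine, and cut-admissibility is not needed for this direction, although restricting to cut-free proofs would not change the analysis.
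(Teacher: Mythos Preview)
Your proof follows the same strategy as the paper: characterise the provable sequents as the intersection of all rule-closed sets of sequents, observe that a set of sequents is a third-order object, and check that ``rule-closed'' is second-order. The paper's own proof is a two-sentence sketch; you fill in considerably more detail, in particular the analysis of $(L_\omega)$ showing that the partition, the family $(M_i)$, and the choice functions $c$ are all reals, which is exactly the point the paper leaves implicit.

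One small slip: your claim that a witnessing derivation of $\Delta_i\Vdash M_i$ has height bounded by the complexity of $\Delta_i$ is false. The rule $(^\ast L)$ has premises $\Gamma,A^n,\Theta\vdash x$ for all $n$, and decomposing $A^n$ then requires $n$ further steps; e.g.\ for $\Delta=(p\cdot q)^\ast$ the branch through $(p\cdot q)^n$ has length $n+1$, so heights are unbounded. What the multiset argument actually gives is well-foundedness, not a finite height bound, so ``$\Delta\Vdash M$'' is a positive arithmetical inductive definition over finite sequents and hence $\Pi^1_1$ rather than arithmetical. Your own hedge ``in any case analytical with no second-order parameter, which is all we need'' is correct and is all the $\Pi^2_1$ bound requires, so the error is harmless for the theorem; just drop the height-bound sentence.
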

\begin{proof}
	A sequent $s$ is provable iff for each set $S$ of sequents, if $S$ contains all axioms and is closed under rule applications, then $S$ contains $s$. Sequents of $\Logic$ are either finite (then they can be encoded by natural numbers) or countably infinite (then they can be encoded by subsets of $\omega$). Thus, the quantifier ``for each set $S$ of sequents'' is third-order. All the remaining properties are at most second-order.
\end{proof}

\begin{theorem}
	Provability in $\Logic$ is $\Pi^1_2$-hard.
\end{theorem}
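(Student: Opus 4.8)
The plan is to reduce a $\Pi^1_2$-complete problem to provability in $\Logic$. A convenient complete problem is the following: fix (the index $e$ of) a recursive predicate $\theta(\sigma,\tau)$ on pairs of equal-length finite sequences with $\sigma\in\omega^{<\omega}$, $\tau\in 2^{<\omega}$, and ask whether
$$\forall X\in 2^\omega\ \exists f\in\omega^\omega\ \forall n\ \theta(f{\restriction}n,\,X{\restriction}n).$$
By the standard $\Pi^1_2$ normal form (the inner part expresses that $(f,X)$ is a branch through a recursive tree, i.e.\ membership of $X$ in the body of a $\Sigma^1_1$ set) every $\Pi^1_2$ set reduces recursively to an instance of this. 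I would construct, recursively in $e$, a sequent $s_e$ that is provable in $\Logic$ exactly when the displayed statement holds.

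The three quantifier blocks are matched to three features of $\Logic$. First, $\forall X\in 2^\omega$ is carried by the $(L_\omega)$ rule: a subformula $(d_0\vee d_1)^\omega$ in the antecedent of $s_e$ is decomposed by $(^\omega L)$ into the type-2 sequence $\langle d_0\vee d_1\rangle_{i<\omega}$, and then one application of $(L_\omega)$ with $M_i=\{d_0,d_1\}$ for every $i$ produces one premise per element $X$ of $\prod_i M_i\cong 2^\omega$, the $i$-th position having been resolved to $d_{X(i)}$ — this is precisely the continuum-branching case the rule allows. Second, $\exists f\in\omega^\omega$ is carried by the partition choice in $(^\omega R)$: once, for a fixed $X$, the antecedent has been brought to an $\omega$-sequence and the succedent to $B^\omega$, the rule $(^\omega R)$ requires one to group the antecedent into consecutive finite blocks $\Gamma^X_0,\Gamma^X_1,\dots$, and this choice of block boundaries is an increasing sequence, i.e.\ a real, which encodes $f$. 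Third, $\forall n\ \theta(\cdots)$ lives at the leaves: each premise $\Gamma^X_k\vdash B$ is a type-1 sequent over a fixed fragment, and using marker formulae in the antecedent (supplied by the left $\omega$-iteration), residuals to thread the prefixes $X{\restriction}k$ and $f{\restriction}k$ through, and the $(^\ast L)$-fragment (i.e.\ $\mathrm{ACT}_\omega$, which is $\Pi^0_1$-complete) to run $\theta$, one arranges that $\Gamma^X_k\vdash B$ is provable iff $\theta(f{\restriction}k,X{\restriction}k)$; the "$\forall n$'' is then just the conjunction of these premises over all $k$. Composing the three, $s_e$ is provable iff for every $X$ there is a block partition $f$ with all $\Gamma^X_k\vdash B$ provable, i.e.\ iff the displayed $\Pi^1_2$ statement holds.

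Soundness of the reduction (the statement implies provability of $s_e$) is immediate: one builds exactly the proof just described. Completeness (provability of $s_e$ implies the statement) is where the work lies. Here I would use cut admissibility — hence the invertibility of $(\vee L),(\cdot L),(^\ast L),(^\omega L),(\bs R),(/R)$ noted after the mix theorem — to normalise an arbitrary proof of $s_e$, and then a focusing-style analysis to show the normalised proof must have the intended shape: the $(^\omega L)/(L_\omega)$ block binding $X$, then per-$X$ the $(^\omega R)$ step binding $f$, then the leaf sequents.

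The main obstacle is controlling the order of the second-order quantifiers: one must force every $(L_\omega)$-step that binds $X$ to lie strictly below every $(^\omega R)$-step that binds $f$, so that the prefix reads $\forall X\exists f$ and not $\exists f\forall X$ — the latter would only give $\Sigma^1_2$-hardness, an incomparable class. I would enforce this with a guarding gadget: the leftmost part of the antecedent of $s_e$ carries a single ``token'' formula, and $d_0,d_1$ are chosen so that in any attempt to apply $(^\omega R)$ before the $d$'s have been consumed, all but one block lacks the token and so cannot derive $B$; since one finite block cannot absorb infinitely many $d$'s, $(^\omega R)$ is simply unavailable until the whole infinite left part has been processed by $(L_\omega)$. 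Checking that this guard has no loophole — that there is genuinely no proof performing the $f$-choice first — together with verifying that the leaf sequents faithfully compute $\theta$ and see no ``global'' information beyond $X{\restriction}k$ and $f{\restriction}k$, is the delicate, technical core of the argument, to be carried out in the style of existing undecidability/complexity encodings for infinitary action logics.
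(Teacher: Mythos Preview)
Your high-level strategy --- universal second-order via the continuum branching of $(L_\omega)$ on $(d_0\vee d_1)^\omega$, existential second-order via the partition choice in $(^\omega R)$, and arithmetic at the leaves --- is exactly the mechanism the paper exploits. The difference lies in the choice of $\Pi^1_2$-complete problem. The paper does not start from the generic normal form $\forall X\,\exists f\,\forall n\,\theta(f{\restriction}n,X{\restriction}n)$ but from Finkel's universality problem for $\omega$-context-free languages, namely whether $\bigcup_i L(G_{2i-1})\cdot L(G_{2i})^\omega=\{a,b\}^\omega$. This problem has the crucial feature that, after fixing $X=w\in\{a,b\}^\omega$ and a partition, the $k$-th leaf sequent only needs to test whether the $k$-th block of $w$ lies in a fixed context-free language --- a purely local check with no dependence on earlier blocks. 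The encoding then reduces to the standard Lambek-calculus representation of context-free grammars, and the whole argument is a few lines.

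Your route, by contrast, requires the $k$-th leaf to test $\theta(f{\restriction}k,X{\restriction}k)$, which depends on the full history. You propose to ``thread the prefixes through'' using residuals, but this is the real gap: the $(^\omega R)$ rule splits the antecedent into disjoint finite blocks $\Gamma_0,\Gamma_1,\ldots$, each of which must derive the same $B$ in isolation, so there is no channel along which a residual could carry state from block $k$ to block $k+1$. Without a concrete mechanism here the reduction does not go through. Separately, the guarding gadget you worry about is unnecessary: the rules binding $X$ --- namely $(^\omega L)$ and the $(\vee L)$ instances packaged in $(L_\omega)$ --- are invertible (as you yourself note), so any proof of $s_e$ can be normalised to one in which they are applied at the root, below every $(^\omega R)$. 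The quantifier prefix $\forall X\,\exists f$ is therefore forced automatically; this is precisely how the paper argues, and no token mechanism is needed.
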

\begin{proof}
	We reduce a $\Pi^1_2$-complete problem from \cite{Finkel09}. Given context-free grammars $G_1,\ldots,G_{2n}$ over the alphabet $\{a,b\}$, we need to check whether the following holds:
	\vspace{-0.3cm}
	\begin{equation}\label{equation:totality}
		\bigcup\limits_{i=1}^n L(G_{2i-1}) \cdot L(G_{2i})^{\omega}
		= \{a,b\}^{\omega}
	\end{equation}
	
	Given a context-free grammar $G_i$, one can find finitely many formulae $T_{ij}(a),T_{ij}(b)$ of the form $p$, $p/q$ or $(p/q)/r$ and formulae $s_i \in \mathrm{Var}$ such that $a_1\ldots a_n \in L(G_i)$ $\Longleftrightarrow$ $T_{ij_1}(a_1), \ldots, T_{ij_n}(a_n) \vdash s_i$ is provable in $\Logic$. 
	Without loss of generality, let $T_i(c)$ and $T_j(d)$ not contain common variables whenever $i \ne j$ and $c,d \in \{a,b\}$.
	
	We claim that (\ref{equation:totality}) holds iff the following sequent is provable in $\Logic$ ($q$ is a fresh variable).
	\[
	\left(
	\bigwedge\limits_{i=1}^{2n}\bigwedge\limits_j T_{ij}(a)
	\vee
	\bigwedge\limits_{i=1}^{2n}\bigwedge\limits_j T_{ij}(b)
	\right)^\omega
	\vdash
	\bigvee_{i=1}^n \left( s_{2i-1} \cdot s_{2i}^{\omega} \right)
	\]
	By invertibility of $(^\omega L)$ and $(\vee L)$, derivability of the latter sequent is equivalent to that of the following sequent for each $w = w_0w_1 \ldots \in \{a,b\}^\omega$:
	\[
	\left\langle\bigwedge\limits_{i=1}^{2n} \bigwedge\limits_j T_{ij}(w_k)\right\rangle_{k<\omega}
	\vdash 
	\bigvee_{i=1}^n \left( s_{2i-1} \cdot s_{2i}^{\omega} \right)
	\]
	Since none of the formulae is copied in any possible derivation of the latter sequent, one can assume that all the conjunctions in the antecedents are principal in the last rule application and that then the right-hand side disjunction and product are principal. (Note that there is no product in the antecedent.) Thus the above sequent is equiderivable with the following one for some $i_k,j_k,l$:
	\[
	\left\langle T_{i_kj_k}(w_k)\right\rangle_{k<\omega}
	\vdash 
	s_{2l-1} \cdot s_{2l}^{\omega}
	\]
	which, in turn, is equiderivable with the sequents 
	\[
	\left\langle T_{i_kj_k}(w_k)\right\rangle_{k<n_0}
	\vdash 
	s_{2l-1}
	\qquad
	\left\langle T_{i_kj_k}(w_k)\right\rangle_{n_p \le k < n_{p+1}}
	\vdash 
	s_{2l}
	\]
	for some $n_0 \le n_1 \le \ldots$. Derivability of the first sequent is equivalent to the fact that $w_{0} \ldots w_{n_0-1} \in L(G_{2l-1})$, while derivability of the second one implies that $w_{n_p} \ldots w_{n_{p+1}-1} \in L(G_{2l})$ for all $p \ge 0$. Thus, $w \in L(G_{2l-1}) \cdot L(G_{2l})^\omega$. This indeed implies that derivability of the original sequent is equivalent to (\ref{equation:totality}) being the case.

\end{proof}

\subsection*{Funding} %Acknowledgements}

This work was supported by the Russian Science Foundation under grant no.~23-11-00104, \href{https://rscf.ru/en/project/23-11-00104/}{https://rscf.ru/en/project/23-11-00104/}.

\bibliographystyle{plain}
\bibliography{bibliography}

\end{document}